%% file: paper.tex
\documentclass[conference]{IEEEtran}
\IEEEoverridecommandlockouts

\usepackage[T1]{fontenc}
\usepackage[utf8]{inputenc}
\usepackage{amsmath,amssymb,amsfonts}
\usepackage{algorithm,algpseudocode}
\usepackage{graphicx}
\usepackage{cite}
\usepackage{bm}
\usepackage{xcolor}
\usepackage{siunitx}
\usepackage{url}
\usepackage{stmaryrd}
\usepackage{microtype}
\usepackage{afterpage}
\usepackage[letterpaper,top=0.75in,bottom=1in,left=0.7in,right=0.7in]{geometry}

\newtheorem{theorem}{Theorem}
\newtheorem{proposition}{Proposition}

\newtheorem{remark}{Remark}

\newcommand{\calW}{\mathcal{W}}
\newcommand{\calK}{\mathcal{K}}
\newcommand{\calL}{\mathcal{L}}

\newcommand{\calC}{\mathcal{C}}
\newcommand{\calD}{\mathcal{D}}
\newcommand{\calI}{\mathcal{I}}
\newcommand{\calN}{\mathcal{N}}
\newcommand{\RR}{\mathbb{R}}
\newcommand{\GF}{\mathrm{GF}}

\newcommand{\expc}[2]{\mathbb{E}_{#1}\!\left[#2\right]}
\newcommand{\expcs}[2]{\mathbb{E}_{#1}[#2]}
\newcommand{\pr}[1]{\Pr\!\left\{#1\right\}}
\DeclareMathOperator{\dH}{d_H}

\begin{document}

\title{Variational Probabilistic Quantization for\\ Secret Key Generation}

\author{\IEEEauthorblockN{Xinyang Li, Vlad C. Andrei, Peter J. Gu, Yiqi Chen, Ullrich J.\ M\"onich, Holger Boche}
\IEEEauthorblockA{Chair of Theoretical Information Technology, Technical University of Munich, Munich, Germany\\
Email: \{xinyang.li, vlad.andrei, peter.gu, yiqi.chen, moenich, boche\}@tum.de}
\thanks{This work was supported by the Bavarian Ministry of Economic Affairs, Regional Development and Energy within the project ``6G Future Lab Bavaria''; by the project ``Next Generation AI Computing (gAIn)'', funded by the Bavarian Ministry of Science and the Arts and the Saxon Ministry for Science, Culture, and Tourism; and by the Federal Ministry of Research, Technology and Space (BMFTR) under the programme CommUnity (16KISS013) and the programme ``Souver\"an. Digital. Vernetzt.'', joint project 6G-life (16KIS2414).}
\thanks{\textcopyright\ 2026 IEEE. Personal use of this material is permitted. Permission from IEEE must be obtained for all other uses, in any current or future media, including reprinting/republishing this material for advertising or promotional purposes, creating new collective works, for resale or redistribution to servers or lists, or reuse of any copyrighted component of this work in other works.}
}

\maketitle

\begin{abstract}
Secret key generation from correlated observations at Alice and Bob, in the presence of an eavesdropper Eve, underpins physical-layer security. Classical pipelines quantize by hand, amplify privacy afterwards, and optimize no objective tied to a key rate. We propose Variational Probabilistic Quantization (VPQ): neural encoders that map the correlated sources directly into a discrete key alphabet, trained by a variational adversarial objective whose entropy, mismatch, and leakage terms match the three terms of the one-way secret key rate. A linear code-offset secure sketch then reconciles the encoder outputs into an identical key without a separate privacy amplification step. We prove that the VPQ losses lower-bound the one-way secret key capacity of the induced source, and derive in closed form the optimal worst-case key rate over the source class of a given alphabet size and mismatch probability, attained by finite-field linear sketches. On Gaussian fading channels, VPQ leaks less to a correlated eavesdropper than one classical and two recent learning-based baselines, and Reed--Solomon reconciliation operates within the predicted finite-blocklength gap.
\end{abstract}

\begin{IEEEkeywords}
Variational quantization, secret key generation, physical-layer security, mutual information, secure sketch.
\end{IEEEkeywords}

\input{sections/intro}
\input{sections/sysmodel}
\input{sections/method}
\input{sections/theory}
\input{sections/experiments}
\input{sections/conclusion}

\bibliographystyle{IEEEtran}
\bibliography{refs}

\end{document}

%% file: sections/intro.tex
\section{Introduction}
\label{sec:intro}

Secret keys derived from the wireless channel are a basic building block of physical-layer security~\cite{wyner1975wire,nguyen2021security}. The dominant pipeline for physical-layer key (PLK) generation~\cite{wang2011fast,zhang2016key} is fixed in form: channel probing, hand-designed scalar quantization based on level-crossing or cumulative-distribution-function (CDF) rules, information reconciliation, and privacy amplification. Such quantizers act on the marginal of a single scalar observable, ignore the joint structure of Alice's and Bob's observations, and treat secrecy as an after-the-fact decorrelation step rather than as a design objective. As a consequence, their secrecy degrades sharply once Eve becomes correlated with the legitimate channel, which is precisely the regime where PLK generation must provide protection.

Recent learning-based methods~\cite{han2020physical,guo2025physical,zhou2023daae} replace the hand-designed quantizer with an autoencoder that maps the observations into reciprocal continuous latent features, followed by a separate, non-differentiable quantization step; secrecy is either deferred to privacy amplification or, in~\cite{zhou2023daae}, enforced by a domain-adversarial branch. The discretization still sits outside the training loop, and no objective is tied to an information-theoretic key rate.

We map the correlated sources directly into a discrete key alphabet in a single end-to-end differentiable step, so that uniformity, agreement, and secrecy are governed by the training objective rather than recovered post hoc. We propose Variational Probabilistic Quantization (VPQ): probabilistic neural-network encoders trained by a variational adversarial objective whose entropy, mismatch, and leakage terms are optimized against a learnable predictor at Eve~\cite{poole2019variational,cheng2020club}. A code-offset secure sketch~\cite{dodis2004fuzzy} then reconciles the encoder outputs into an identical key. Because the adversarial objective already drives the leakage to Eve toward zero, no separate privacy amplification step is required. The construction remains effective under a correlated Eve.

\emph{Contributions.}
(i) A two-stage framework that combines VPQ with a $q$-ary linear code-offset secure sketch and produces, through a single variational adversarial loss, near-uniform and agreeing discrete random variables with low leakage to Eve.
(ii) Two information-theoretic guarantees, valid in particular under a correlated Eve: the VPQ training losses lower-bound the one-way secret key capacity of the induced source, and the optimal worst-case one-way reconciliation rate over the i.i.d.\ source class with a given alphabet size and mismatch probability admits a closed form attained by $q$-ary linear code-offset sketches; a bounded-leakage proposition caps the key's leakage rate by the residual leakage.
(iii) An empirical study on Gaussian fading sources under absent, uncorrelated, and correlated Eve, benchmarked against the conventional key generation (Conv.\ KG) pipeline of~\cite{zhang2016key} and two recent learning-based methods, the mutual-information-driven autoencoder (MIAE)~\cite{guo2025physical} and the domain-adversarial autoencoder (DAAE)~\cite{zhou2023daae}: VPQ delivers usable keys without a privacy amplification stage even against a correlated Eve, whereas all three baselines leak up to 2.2 bits per symbol.

\emph{Notation.} Random variables are written in upper case and their realizations in lower case; $\bm x$ denotes a vector. $H(\cdot)$ and $I(\cdot\,;\cdot)$ are the entropy and the mutual information in bits, $h_2(\cdot)$ the binary and $h_q(\cdot)$ the $q$-ary entropy function, and $D(\cdot\|\cdot)$ the Kullback--Leibler divergence. Alice and Bob quantize their observations into $W$ and $V$ on a common alphabet $\calW$ of size $q=|\calW|$, with mismatch rate $p_e=\pr{W\neq V}$; $\GF(q)$ is the finite field of order $q$, $\calD_{p_e}$ the i.i.d.\ source class of Theorem~\ref{thm:univ}, $B$ the batch size, $\alpha$ the averaging factor, and $\lambda_1$, $\lambda_2$ the loss weights, all introduced in Section~\ref{sec:vpq}.

%% file: sections/sysmodel.tex
\section{System Model}
\label{sec:sys}

\begin{figure*}[!t]
\centering
\includegraphics[width=0.68\textwidth]{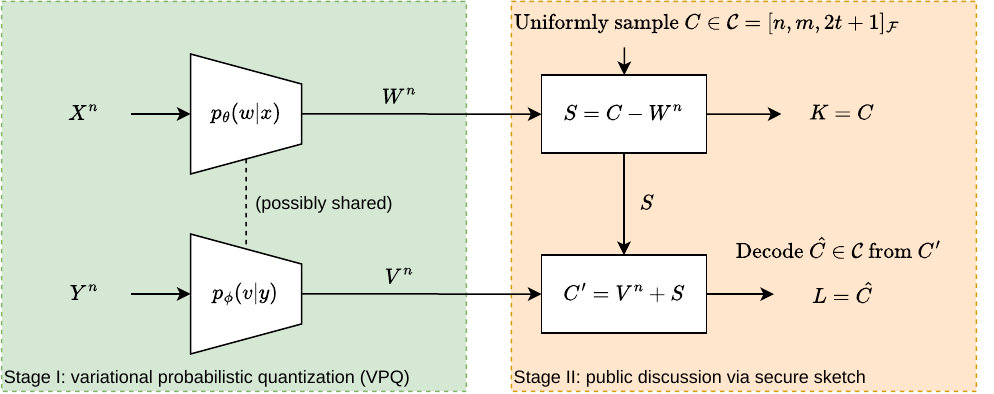}
\caption{Two-stage framework. Stage I (VPQ): probabilistic encoders $p_\theta(w|x)$ and $p_\phi(v|y)$ map correlated observations $X^n,Y^n$ elementwise to discrete sequences $W^n,V^n$ on a common $q$-ary alphabet $\calW$. Stage II: a $q$-ary linear code-offset secure sketch turns $(W^n,V^n)$ into an identical key $K=L$ via a single one-way public message $S$, with no privacy amplification step.}
\label{fig:overview}
\end{figure*}

Alice, Bob, and Eve observe correlated sequences $\{(X_i,Y_i,Z_i)\}_{i=1}^{n}$ drawn i.i.d.\ from a joint distribution $p(x,y,z)$, with $Z=\varnothing$ if Eve is absent; Fig.~\ref{fig:overview} shows the two stages. In the one-way public-discussion setting~\cite{ahlswede1993common,maurer1993secret}, Alice transmits a public message $M=\Phi(X^n)$ and the two parties extract random variables
\begin{equation}
K = f(X^n),\qquad L = g(Y^n,M),
\end{equation}
on a common alphabet $\calK$; the mappings $f,g,\Phi$ may be deterministic or stochastic. The pair $(K,L)$ is a secret key if, for every $\epsilon>0$ and sufficiently large $n$,
\begin{align}
\pr{K\neq L}&<\epsilon, \label{eq:agree}\\
\tfrac{1}{n}\log|\calK|-\tfrac{1}{n}H(K)&<\epsilon, \label{eq:unif}\\
\tfrac{1}{n}I(K;Z^n,M)&<\epsilon. \label{eq:leak}
\end{align}
The supremum of achievable rates $\tfrac{1}{n}H(K)$ defines the one-way secret key (SK) capacity~\cite{maurer1993secret,ahlswede1993common,csiszar2011information},
$C_{\mathrm{SK}}(X,Y|Z)=\max\, I(T;Y|U)-I(T;Z|U)$,
where the maximum is over all auxiliary $(U,T)$ satisfying the Markov chain $U{-}T{-}X{-}(Y,Z)$.

Although $C_{\mathrm{SK}}$ is well characterized, practical extraction is open when $p(x,y,z)$ is unknown and high-dimensional. We address this through a two-stage framework: a variational stage that learns encoders at Alice and Bob mapping $(X,Y)$ into discrete random variables $(W,V)$ on a $q$-ary alphabet $\calW$, and a reconciliation stage that uses linear code-offset sketches to convert $(W^n,V^n)$ into an identical key satisfying~\eqref{eq:agree}--\eqref{eq:leak}.

%% file: sections/method.tex
\section{Two-Stage Framework}
\label{sec:vpq}

\subsection{Variational Probabilistic Quantization}
\label{subsec:vpq}

VPQ employs two probabilistic encoders, $p_\theta(w|x)$ and $p_\phi(v|y)$, mapping observations elementwise to discrete random variables $(W,V)$ on the common $q$-ary alphabet $\calW$. The last two layers of each encoder are a linear projection followed by a softmax of dimension $q$; the encoders may share parameters when $X$ and $Y$ are statistically symmetric. Three loss terms jointly drive $(W,V)$ toward high agreement, near-uniform marginals, and low mutual information with Eve's observation.

\emph{Mismatch loss.} Maximizing the agreement rate $\pr{W{=}V}$ means maximizing $\sum_{w'\in\calW}\Pr\{W{=}w',V{=}w'\}$. With $\bm w$ and $\bm v$ the one-hot embeddings of $W$ and $V$, agreement is the inner product $\bm w^\top \bm v$, so $\pr{W{=}V \mid x,y}=\expc{p_\theta(w|x)p_\phi(v|y)}{\bm w^\top\bm v}$. Conditional independence of $W$ and $V$ given $(X,Y)$ factorizes this expectation,
\begin{equation}
\pr{W{=}V \mid x,y}
= \sum_{w'\in\calW} p_\theta(w'|x)\,p_\phi(w'|y),
\end{equation}
yielding the empirical mismatch loss over a batch $\{(x_i,y_i)\}_{i=1}^{B}$,
\begin{equation}\label{eq:lmr}
\calL_{\mathrm{MR}} = -\frac{1}{B}\sum_{i=1}^{B}\sum_{w'\in\calW} p_\theta(w'|x_i)\,p_\phi(w'|y_i),
\end{equation}
which is fully differentiable in $(\theta,\phi)$ and drives both encoders toward a common (one-hot) output on each paired input.

\emph{Uniformity loss.} The mismatch loss alone admits a degenerate solution $p_\theta(w|x)=p_\phi(w|y)=\bm 1\{w{=}w_0\}$ for some fixed $w_0$, which violates the uniformity requirement~\eqref{eq:unif}. We therefore push $H(W)$ and $H(V)$ toward $\log q$. When $|\calW|$ exceeds the batch size, batch-wise marginal estimates are noisy, so we maintain exponential moving averages (EMA) $p_t(w)$ of the marginal $\Pr\{W=w\}$ and $p_t(v)$ of $\Pr\{V=v\}$ across training steps,
\begin{equation}\label{eq:ema}
p_t(w)=\alpha\,p_{t-1}(w)+\frac{1-\alpha}{B}\sum_{i=1}^{B}p_\theta(w|x_i),
\end{equation}
with $0\le\alpha<1$, $p_{t-1}$ detached from the gradient, and $p_t(v)$ defined analogously. The uniformity loss is
\begin{equation}\label{eq:lent}
\calL_{\mathrm{ENT}} = -\frac{1}{2(1-\alpha)}\bigl(\hat H(W)+\hat H(V)\bigr),
\end{equation}
where $\hat H(W)=-\sum_w p_t(w)\log p_t(w)$ and $\hat H(V)$ is defined symmetrically; the factor $(1-\alpha)^{-1}$ rescales the current-batch gradient contribution to match the un-EMA estimator, since only the last term in~\eqref{eq:ema} contributes to the gradient.

\emph{Leakage loss.} When Eve is absent or uncorrelated with $(X,Y)$, the data processing inequality on $W{-}X{-}Z$ gives $I(W;Z)\le I(X;Z)=0$ and the loss reduces to $\calL_{\mathrm{AB}}=\calL_{\mathrm{ENT}}+\lambda_1\calL_{\mathrm{MR}}$, $\lambda_1>0$. When $Z$ is correlated, $I(W;Z)$ is intractable, and we sandwich it variationally against a learnable predictor $p_\psi(w|z)$ at Eve. The Barber--Agakov variational lower bound (VLB)~\cite{poole2019variational}, $I_{\mathrm{VLB}}:=H(W)+\expcs{p(w,z)}{\log p_\psi(w|z)}\le I(W;Z)$, is tight at $p_\psi(w|z)=p(w|z)$; the contrastive log-ratio upper bound (CLUB)~\cite{cheng2020club} provides a variational upper bound (VUB), $I_{\mathrm{VUB}}:=\expcs{p(w,z)}{\log p_\psi(w|z)}-\expcs{p(w)p(z)}{\log p_\psi(w|z)}\ge I(W;Z)$ whenever $p_\psi\approx p(w|z)$. With the joint expectation replaced by a minibatch average via $W\perp Z\mid X$ and the $H(W)$ term of $I_{\mathrm{VLB}}$ dropped (a constant under fixed $(\theta,\phi)$, separately controlled by $\calL_{\mathrm{ENT}}$), the empirical batch estimators are
\begin{align}
\calI_{\mathrm{VLB}} &= \tfrac{1}{B}\textstyle\sum_{i,w} p_\theta(w|x_i)\log p_\psi(w|z_i), \label{eq:vlb}\\
\calI_{\mathrm{VUB}} &= \calI_{\mathrm{VLB}} - \tfrac{1}{B^2}\textstyle\sum_{i,j,w} p_\theta(w|x_i)\log p_\psi(w|z_j), \label{eq:vub}
\end{align}
which respectively underestimate and (after a successful adversarial step) overestimate $I(W;Z)$. The encoders $(\theta,\phi)$ minimize
\begin{equation}\label{eq:total}
\calL = \calL_{\mathrm{ENT}} + \lambda_1 \calL_{\mathrm{MR}} + \lambda_2 \calI_{\mathrm{VUB}},
\end{equation}
where the leakage weight $\lambda_2 = \|\nabla_{\theta_L}\calL_{\mathrm{AB}}\|_2 / (\|\nabla_{\theta_L}\calI_{\mathrm{VUB}}\|_2 + \delta)$ adaptively balances the two gradient norms ($\theta_L$ is the last linear layer, $\delta=10^{-7}$). A direct calculation gives $I_{\mathrm{VUB}}-I_{\mathrm{VLB}}=\expcs{p(z)}{D(p(w)\|p_\psi(w|z))}\ge 0$, so the sandwich tightens as $p_\psi\to p(w|z)$.

Algorithm~\ref{alg:vpq} summarizes training. Under a correlated Eve, the $\psi$-update is disabled during a warm-up phase that updates $(\theta,\phi)$ only; otherwise $p_\psi$ overfits a random encoder and the leakage signal is uninformative. After warm-up, $\psi$- and $(\theta,\phi)$-updates alternate one-to-one.

\begin{algorithm}[!t]
\caption{VPQ training}\label{alg:vpq}
\begin{algorithmic}[1]
\For{each training step $t$}
  \State Sample batch $\{(x_i,y_i,z_i)\}_{i=1}^{B}$; compute $p_\theta(w|x_i),p_\phi(v|y_i)$
  \State Compute $\calL_{\mathrm{MR}}$~\eqref{eq:lmr}; update $p_t(w),p_t(v)$~\eqref{eq:ema}; compute $\calL_{\mathrm{ENT}}$
  \State $\calL_{\mathrm{AB}}\!\leftarrow\!\calL_{\mathrm{ENT}}+\lambda_1\calL_{\mathrm{MR}}$
  \If{Eve is present}
    \If{$\psi$-update step}
      \State Compute $p_\psi(w|z_i)$, $\calI_{\mathrm{VLB}}$~\eqref{eq:vlb}; update $\psi$ to maximize $\calI_{\mathrm{VLB}}$
    \EndIf
    \If{$(\theta,\phi)$-update step}
      \State Compute $\calI_{\mathrm{VUB}}$~\eqref{eq:vub}, $\lambda_2$; update $(\theta,\phi)$ to minimize $\calL_{\mathrm{AB}}+\lambda_2\calI_{\mathrm{VUB}}$
    \EndIf
  \Else
    \State Update $(\theta,\phi)$ to minimize $\calL_{\mathrm{AB}}$
  \EndIf
\EndFor
\end{algorithmic}
\end{algorithm}

\begin{remark}[Lautum diagnostic]\label{rem:lautum}
With a well-trained predictor $p_\psi(w|z)\approx p(w|z)$, the theoretical gap $I_{\mathrm{VUB}}-I_{\mathrm{VLB}}$ converges to the lautum information $L(W;Z):=\expcs{p(z)}{D(p(w)\|p(w|z))}$, which vanishes if and only if $I(W;Z)=0$~\cite{palomar2008lautum}. Since $\calI_{\mathrm{VLB}}$ in~\eqref{eq:vlb} omits the additive $H(W)$ in $I_{\mathrm{VLB}}$, the empirical gap $\calI_{\mathrm{VUB}}-\calI_{\mathrm{VLB}}$ converges to $H(W)+L(W;Z)$. In the uniform regime enforced by $\calL_{\mathrm{ENT}}$ this limit equals $\log q$ exactly when the leakage vanishes, so observing $\calI_{\mathrm{VUB}}-\calI_{\mathrm{VLB}}\to\log q$ during training is a sufficient empirical signature of zero leakage.
\end{remark}

\subsection{Code-Offset Reconciliation}
\label{subsec:reconcile}

Since $p_e$ is generally nonzero, the probability that $W^n=V^n$ decays exponentially in $n$, so a public-discussion step is needed. We use the code-offset secure sketch~\cite{dodis2004fuzzy} over the finite field $\GF(q)$~\cite{roth2006introduction} with an $[n,m,2t+1]_q$ linear code $\calC$ of blocklength $n$, dimension $m$, and minimum Hamming distance $2t+1$: Alice samples $K$ uniformly from $\calC$, sends $S=K-W^n$, and Bob computes $L=\hat K=\arg\min_{c\in\calC}\dH(c,V^n+S)$, where $\dH(\cdot,\cdot)$ denotes Hamming distance. Whenever $\dH(W^n,V^n)\le t$, $\hat K=K$. Each block produces $H(K)/n=(m/n)\log q$ key bits.

%% file: sections/theory.tex
\section{Information-Theoretic Guarantees}
\label{sec:theory}

We connect the VPQ losses to the achievable secret key rate in two steps. The first relates the discrete source $(W,V,Z)$ induced by the encoders to the one-way SK capacity. The second characterizes the optimal universal key rate that code-offset reconciliation can attain on top of any encoder with given alphabet size and mismatch probability.

\subsection{Achievable Rate of the Induced Source}

Define the $q$-ary entropy $h_q(p):=[h_2(p)+p\log(q-1)]/\log q$.

\begin{theorem}[VPQ achievable rate]\label{thm:vpq}
For $W,V$ on a $q$-ary alphabet with mismatch probability $p_e=\pr{W\neq V}$ and Eve's observation $Z$,
\begin{equation}\label{eq:thm1}
C_{\mathrm{SK}}(W,V|Z)\;\ge\;H(W)-h_2(p_e)-p_e\log(q-1)-I(W;Z),
\end{equation}
and equality $C_{\mathrm{SK}}(W,V|Z)=I(W;V)$ holds whenever $I(W;Z)=0$.
\end{theorem}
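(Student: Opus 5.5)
The plan is to specialize the one-way capacity formula $C_{\mathrm{SK}}(X,Y|Z)=\max\, I(T;Y|U)-I(T;Z|U)$ of Section~\ref{sec:sys} to the induced source, taking $(X,Y,Z)\to(W,V,Z)$. For the lower bound~\eqref{eq:thm1}, I would choose the auxiliaries $U=\varnothing$ (a constant) and $T=W$. The Markov chain $U{-}T{-}W{-}(V,Z)$ then holds trivially, and the formula gives
\begin{equation*}
C_{\mathrm{SK}}(W,V|Z)\;\ge\; I(W;V)-I(W;Z).
\end{equation*}
Next I would write $I(W;V)=H(W)-H(W|V)$ and bound $H(W|V)$ by Fano's inequality. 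The estimator $\hat W=V$ takes values in the same $q$-ary alphabet and has error probability $p_e=\pr{W\neq V}$, so
\begin{equation*}
H(W|V)\le h_2(p_e)+p_e\log(q-1).
\end{equation*}
Substituting this into the previous display yields exactly~\eqref{eq:thm1}.

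For the equality claim, I would show the matching upper bound $C_{\mathrm{SK}}(W,V|Z)\le I(W;V)$ for every admissible $(U,T)$, regardless of $Z$. Two observations give this. First, $-I(T;Z|U)\le 0$. Second, by the chain rule,
\begin{equation*}
I(T;V|U)\le I(U,T;V).
\end{equation*}
Since $U{-}T{-}W{-}V$ makes $(U,T){-}W{-}V$ a Markov chain, the data processing inequality gives $I(U,T;V)\le I(W;V)$. Taking the maximum over $(U,T)$ yields $C_{\mathrm{SK}}(W,V|Z)\le I(W;V)$.

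When $I(W;Z)=0$, the choice $T=W$, $U=\varnothing$ from the first step gives $C_{\mathrm{SK}}(W,V|Z)\ge I(W;V)-0$. Combined with the upper bound, this proves $C_{\mathrm{SK}}(W,V|Z)=I(W;V)$.

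I do not expect a serious obstacle. The only point needing care is checking that the Markov structure required by the capacity formula is respected. For the achievability choice $T=W$ this is immediate. For the converse, one must note that $(U,T){-}W{-}V$ follows from $U{-}T{-}W{-}(V,Z)$ by marginalizing out $Z$. One should also confirm that Fano's inequality applies with $|\calW|=q$ and $\hat W=V$ supported on $\calW$, so that the term is $\log(q-1)$ rather than $\log q$.
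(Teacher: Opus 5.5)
Your proof is correct. The lower bound is the paper's argument exactly: constant $U$ and $T=W$, then Fano's inequality with $\hat W=V$.

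For the equality claim, the two routes differ in where the hypothesis $I(W;Z)=0$ is used. The paper uses it in the converse. From $U{-}T{-}W{-}(V,Z)$ it derives $(U,T)\perp Z\mid W$, combines this with $W\perp Z$ to get $(U,T,W)\perp Z$, and concludes $I(T;Z\mid U)=0$ for every admissible pair. It then bounds the remaining $\max I(T;V\mid U)$ by data processing. You instead drop Eve's term by nonnegativity. This gives the unconditional bound $C_{\mathrm{SK}}(W,V|Z)\le I(W;V)$ for any $Z$, so $I(W;Z)=0$ is needed only in the achievability half.

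Your route is shorter and shows that the hypothesis serves only to make the choice $T=W$ lossless. The paper's route proves something stronger than the equality requires: under zero leakage, Eve's penalty vanishes for every auxiliary pair. Your explicit step $I(T;V\mid U)\le I(U,T;V)$, taken before applying data processing to $(U,T){-}W{-}V$, also supplies a detail the paper compresses into ``data processing gives $I(T;V|U)\le I(W;V)$''.
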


\begin{IEEEproof}
The one-way SK capacity formula~\cite{maurer1993secret,ahlswede1993common} reads
$C_{\mathrm{SK}}=\max_{U-T-W-(V,Z)}I(T;V|U)-I(T;Z|U)$.
Setting $U\equiv\mathrm{const}$ and $T=W$ gives $C_{\mathrm{SK}}\ge I(W;V)-I(W;Z)$. Fano's inequality with $\hat W(V)=V$ and error probability $p_e$ yields $H(W|V)\le h_2(p_e)+p_e\log(q-1)$, so $I(W;V)=H(W)-H(W|V)\ge H(W)-h_2(p_e)-p_e\log(q-1)$, proving~\eqref{eq:thm1}.

When $I(W;Z)=0$, the Markov chain $U{-}T{-}W{-}(V,Z)$ gives $(U,T)\perp Z\mid W$, which combined with $W\perp Z$ yields $(U,T,W)\perp Z$, so $p(u,t,w,z)=p(u,t,w)p(z)$ and $I(T;Z|U)=0$; the capacity reduces to $\max I(T;V|U)$. With $(U,T){-}W{-}V$, data processing gives $I(T;V|U)\le I(W;V)$, matching the lower bound.
\end{IEEEproof}

Each term on the right of~\eqref{eq:thm1} maps directly onto a VPQ loss: $\calL_{\mathrm{ENT}}$ raises $H(W)$ toward $\log q$; $\calL_{\mathrm{MR}}$ shrinks the Fano penalty $h_2(p_e)+p_e\log(q-1)$; and the adversarial $\calI_{\mathrm{VUB}}$ is a variational surrogate for $I(W;Z)$. Every improvement in the three losses raises a rate the encoder pair can deliver. Because the VPQ encoder defines a stochastic mapping $(X,Y,Z)\mapsto(W,V,Z)$, any one-way reconciliation scheme on the induced source composes with the encoder, and data processing gives $C_{\mathrm{SK}}(X,Y|Z)\ge C_{\mathrm{SK}}(W,V|Z)$.

\begin{figure*}[!t]
\centering
\includegraphics[width=0.78\textwidth]{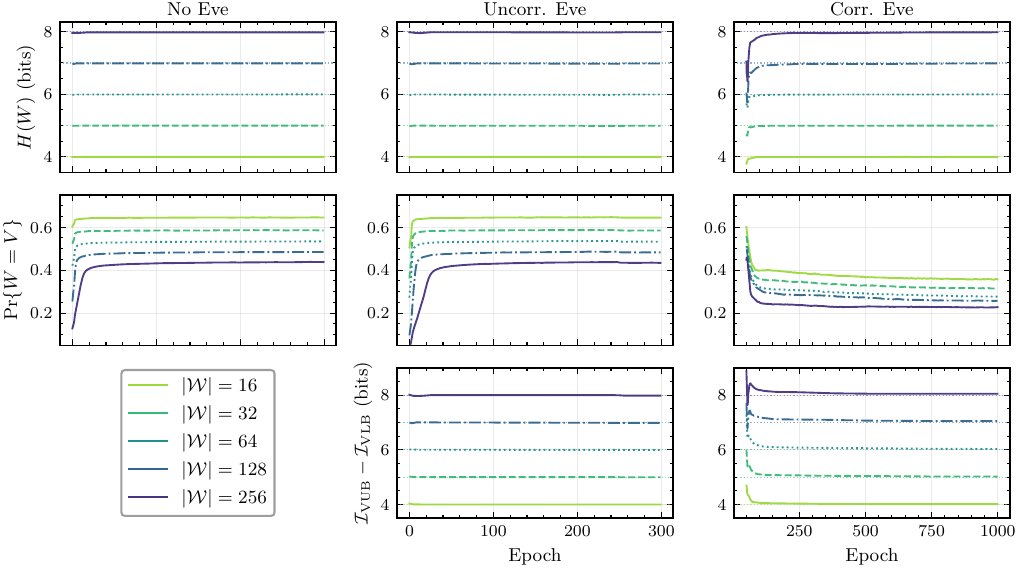}
\caption{VPQ training dynamics on fading channels: encoder entropy $H(W)$ (top), agreement rate $\pr{W=V}$ (middle), and observable leakage gap $\calI_{\mathrm{VUB}}-\calI_{\mathrm{VLB}}$ (bottom) over epochs, for the three Eve scenarios. Horizontal dotted lines mark $\log q$.}
\label{fig:conv}
\end{figure*}

\subsection{Universal Reconciliation Rate}

Given a VPQ output with $W$ uniform on $\GF(q)$ and mismatch probability $p_e$, the appropriate benchmark for reconciliation is the highest one-way rate that can be guaranteed uniformly over all i.i.d.\ source laws compatible with $(q,p_e)$, since the precise law of $(W,V)$ is unknown.

\begin{theorem}[Universal one-way rate]\label{thm:univ}
Let $q$ be a prime power and let $\calD_{p_e}$ be the class of i.i.d.\ triples $(W,V,Z)$ with $W,V$ uniform on $\GF(q)$, $\pr{W\neq V}=p_e<1-1/q$, and $I(W;Z)=0$. The optimal worst-case one-way reconciliation rate over $\calD_{p_e}$ is
\begin{equation}\label{eq:univ-rate}
R^{\star}_{\mathrm{univ}}=(1-h_q(p_e))\log q
\end{equation}
bits per source symbol. Every rate below $R^{\star}_{\mathrm{univ}}$ is achievable, uniformly over $\calD_{p_e}$, by a sequence of $q$-ary linear code-offset secure sketches $S_n=K_n-W^n$ satisfying $\pr{\hat K_n\neq K_n}\to 0$ and $I(K_n;S_n,Z^n)=0$ for every $n$. Operationally, no scheme that is reliable and secure uniformly over $\calD_{p_e}$ can exceed $R^{\star}_{\mathrm{univ}}$; for a specific source law inside $\calD_{p_e}$ a higher rate may be attainable.
\end{theorem}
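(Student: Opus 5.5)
The proof splits into an achievability part and a converse part. For achievability we use random linear codes over $\GF(q)$ together with the code-offset sketch of Section~\ref{subsec:reconcile}. For the converse we exhibit a single worst-case law in $\calD_{p_e}$, the $q$-ary symmetric channel, whose one-way key capacity equals $(1-h_q(p_e))\log q$.

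\emph{Secrecy.} Secrecy of the code-offset sketch needs no coding argument. $K_n$ is uniform on $\calC$ and independent of $(W^n,Z^n)$, while $W^n$ is uniform on $\GF(q)^n$. Hence $S_n=K_n-W^n$ is uniform on $\GF(q)^n$ for every fixed value of $K_n$. Since $I(W;Z)=0$ and the source is i.i.d., $W^n$ is independent of $Z^n$. Conditioned on $(K_n,Z^n)$, $S_n$ is therefore still uniform, so $I(K_n;S_n,Z^n)=0$ for every $n$ and every law in the class. The key rate is $(m/n)\log q$, and $K_n$ is exactly uniform.

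\emph{Reliability, uniformly over the class.} Bob's decoder sees $V^n+S_n=K_n-E^n$ with $E^n:=W^n-V^n$. Each coordinate of $E^n$ is nonzero with probability exactly $p_e$, but the distribution of the nonzero values may depend on the law. We decode by minimum Hamming distance (or, more simply for the analysis, typical-weight decoding: output the unique codeword within distance $n(p_e+\delta)$). By the weak law of large numbers, applied to the i.i.d.\ Bernoulli$(p_e)$ indicators of $E_i\neq 0$, $\dH(W^n,V^n)\le n(p_e+\delta)$ with probability tending to one, uniformly in the law, since the indicator law is fixed. It then suffices to take a linear code whose Hamming balls of radius $n(p_e+\delta)$ around codewords are disjoint with high probability for the realized error. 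The main obstacle is that we need a guarantee against an arbitrary error-pattern distribution with weight $\approx np_e$, not only the symmetric one. We plan to handle it through the standard random-linear-code argument. For a uniformly random generator matrix, any fixed nonzero difference lies in the code with probability $q^{m-n}$. The number of vectors of weight $\le n(p_e+\delta)$ is $q^{n h_q(p_e+\delta)+o(n)}$ for $p_e+\delta<1-1/q$. So the expected number of competing codewords near the received word is at most $q^{m-n+nh_q(p_e+\delta)+o(n)}$, for every fixed error pattern. This vanishes when $m/n<1-h_q(p_e+\delta)$. Averaging over the error pattern and then derandomizing over the class needs care. To do it, we would bound $\sup_{P\in\calD_{p_e}}\Pr\{\text{error}\}$ by the per-pattern bound plus the atypical-weight probability; both are law-independent. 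This gives a single code sequence for all laws, and letting $\delta\to0$ covers every rate below $R^\star_{\mathrm{univ}}$. Since rate loss from expurgation is negligible, a fixed good code exists by the expectation argument.

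\emph{Converse.} Take the law $W$ uniform and $V=W+E$, with $E$ independent of $W$, $\Pr\{E=0\}=1-p_e$, and $E$ uniform on the $q-1$ nonzero elements; let $Z$ be constant. Then $V$ is uniform, so this law lies in $\calD_{p_e}$. By Theorem~\ref{thm:vpq} (the equality case), $C_{\mathrm{SK}}(W,V|Z)=I(W;V)$. A direct computation gives $I(W;V)=\log q-h_2(p_e)-p_e\log(q-1)=(1-h_q(p_e))\log q$. Any scheme that is reliable and secure uniformly over $\calD_{p_e}$ is in particular a valid one-way SK scheme for this law, so by the capacity definition~\eqref{eq:agree}--\eqref{eq:leak} its rate cannot exceed this value. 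Combining the two parts yields~\eqref{eq:univ-rate}. Since $I(W;V)$ can exceed the bound for nonsymmetric error laws, the final remark of the statement follows.
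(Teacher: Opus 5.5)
Your argument is essentially the paper's. The converse uses the same $q$-ary symmetric law with constant $Z$ together with the equality case of Theorem~\ref{thm:vpq}. Achievability uses a random linear code with typical-weight decoding, and both error terms are law-independent: the Bernoulli$(p_e)$ weight deviation and the union bound $B_q(n,p_e+\delta)\,q^{m-n}$ on competing codewords. You use a random generator matrix and analyse Bob's code-offset decoder directly, whereas the paper uses a random parity-check matrix and reduces to syndrome decoding; this difference is cosmetic. With a generator matrix you should add that it has full rank with probability tending to one, so that $|\calC|=q^m$ and the key rate is indeed $(m/n)\log q$.

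The one step that does not hold as written is your final derandomization. The random-coding bound controls $\sup_P \mathbb{E}_{\calC}[\Pr_P\{\text{error}\mid\calC\}]$. A single deterministic code that is good for the whole (uncountable) class $\calD_{p_e}$ needs $\mathbb{E}_{\calC}[\sup_P \Pr_P\{\text{error}\mid\calC\}]$ to be small. Expurgation does not bridge this. In fact it is irrelevant here: by linearity the error event given $\calC$ depends only on $E^n$, not on the key value, so there is nothing to expurgate. The paper avoids the issue by staying at the ensemble level.

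If you want a fixed code, use the method of types. Under every i.i.d.\ law in the class, $E^n$ is uniformly distributed on each type class. Hence $\sup_P\Pr_P\{\text{error}\mid\calC\}$ is at most the atypical-weight probability plus the maximum, over weight-typical types, of the fraction of bad patterns in that type class. Here a pattern is bad if some nonzero codeword lies within distance $n(p_e+\delta)$ of it. Summing over the at most $(n+1)^q$ types bounds the expected maximum by $(n+1)^q B_q(n,p_e+\delta)\,q^{m-n}$, which still vanishes. A single good code sequence then exists by Markov's inequality, combined with the full-rank event.
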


\begin{IEEEproof}
\emph{Converse.} Consider the source law in $\calD_{p_e}$ in which $Z$ is constant and $P_{V|W}$ is the $q$-ary symmetric channel with crossover $p_e/(q-1)$. For this law, $H(V|W)=h_2(p_e)+p_e\log(q-1)$, $I(W;V)=(1-h_q(p_e))\log q$, and $I(W;Z)=0$, so Theorem~\ref{thm:vpq} gives $C_{\mathrm{SK}}=R^{\star}_{\mathrm{univ}}$. Any scheme that is uniformly reliable and secure over $\calD_{p_e}$ must perform reliably and securely on this particular law as well, and therefore cannot exceed $R^{\star}_{\mathrm{univ}}$.

\emph{Achievability.} Fix $R<R^{\star}_{\mathrm{univ}}$ and pick $\rho$ with $h_q(p_e)<\rho<(\log q-R)/\log q$. Choose $m_n$ with $m_n/n\to\rho$, draw a uniform random parity-check matrix $H_n\in\GF(q)^{m_n\times n}$, set $T_n:=H_nW^{n\top}$, and decode $W^n$ as the unique $\hat w^n$ with $H_n\hat w^{n\top}=T_n$ and Hamming distance $\dH(\hat w^n,V^n)\le n(p_e+\eta)$ for some $\eta>0$ with $\rho>h_q(p_e+\eta)$. The error event splits into (i) $\{\dH(W^n,V^n)>n(p_e+\eta)\}$, vanishing by the law of large numbers, and (ii) syndrome collision on the Hamming ball, bounded by $B_q(n,p_e+\eta)\,q^{-m_n}\to 0$, where $B_q(n,r):=\sum_{j=0}^{\lfloor nr\rfloor}\binom{n}{j}(q-1)^j$ is the volume of the $q$-ary Hamming ball and $\tfrac{1}{n}\log_q B_q(n,p_e+\eta)\to h_q(p_e+\eta)<\rho$~\cite{csiszar2011information,csiszar1982linear}; the bound depends only on $p_e$, hence holds uniformly on $\calD_{p_e}$.

Let $\calC_n:=\ker(H_n)$, draw $K_n$ uniform on $\calC_n$ independently of $(W^n,V^n,Z^n)$, and set $S_n:=K_n-W^n$. Decoding $K_n$ from $(V^n,S_n)$ is equivalent to the syndrome-decoding problem above, since $\hat w^n:=\hat k^n-S_n$ obeys $H_n\hat w^{n\top}=T_n$ and $\dH(\hat w^n,V^n)=\dH(\hat k^n,V^n+S_n)$, so $\pr{\hat K_n\neq K_n}\to 0$ uniformly over $\calD_{p_e}$.

\emph{Secrecy.} Uniformity of $W^n$ over $\GF(q)^n$ and independence of $K_n$ from $W^n$ give $\Pr\{K_n{=}k,S_n{=}s\}=|\calC_n|^{-1}q^{-n}=\Pr\{K_n{=}k\}\Pr\{S_n{=}s\}$, hence $K_n\perp S_n$. With $I(W;Z)=0$, $(K_n,S_n)\perp Z^n$ as well, so $I(K_n;S_n,Z^n)=0$ for every $n$.

\emph{Key rate.} $H(K_n)=\log|\calC_n|\ge(n-m_n)\log q$, giving $\liminf_n n^{-1}H(K_n)\ge(1-\rho)\log q>R$, and $R<R^{\star}_{\mathrm{univ}}$ was arbitrary.
\end{IEEEproof}

The bound~\eqref{eq:univ-rate} matches the encoder-induced rate of Theorem~\ref{thm:vpq} at $I(W;Z)=0$, so reconciliation is not the bottleneck of the two-stage design.

\begin{proposition}[Bounded leakage]\label{prop:bound}
Under the same setup as Theorem~\ref{thm:univ} except that the zero-leakage condition is relaxed to $I(W;Z)\le\varepsilon$ for some $\varepsilon\ge 0$, the same code-offset construction achieves
\begin{itemize}
\item[(i)] reliability $\pr{\hat K_n\neq K_n}\to 0$;
\item[(ii)] rate-$\varepsilon$ leakage $\tfrac{1}{n}I(K_n;S_n,Z^n)\le\varepsilon$ for every $n$ (which reduces to weak secrecy in the limit $\varepsilon\to 0$);
\item[(iii)] key rate: for every $\delta>0$, the construction at target rate $R^{\star}_{\mathrm{univ}}-\delta$ attains $\liminf_n \tfrac{1}{n}H(K_n)\ge(1-h_q(p_e))\log q-\delta$.
\end{itemize}
\end{proposition}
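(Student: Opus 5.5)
Parts (i) and (iii) will be inherited from the proof of Theorem~\ref{thm:univ}; only (ii) needs a new argument. For (i), note that the error analysis there never used $I(W;Z)=0$. The typicality event $\{\dH(W^n,V^n)>n(p_e+\eta)\}$ depends only on the i.i.d.\ mismatch indicators with mean $p_e$. The syndrome-collision bound $B_q(n,p_e+\eta)\,q^{-m_n}$ depends only on $(n,m_n,p_e)$. Hence the same random parity-check construction, with $\rho$ chosen as before, gives $\pr{\hat K_n\neq K_n}\to0$ verbatim. For (iii), the construction of $K_n$ uniform on $\calC_n=\ker(H_n)$ is unchanged. So $H(K_n)=\log|\calC_n|\ge(n-m_n)\log q$, and choosing $\rho$ with $h_q(p_e)<\rho\le h_q(p_e)+\delta/\log q$ (together with $\eta$ small enough that $h_q(p_e+\eta)<\rho$) gives $\liminf_n\tfrac1n H(K_n)\ge(1-h_q(p_e))\log q-\delta$.

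For (ii), I will use the chain rule to split the leakage as
\begin{equation*}
I(K_n;S_n,Z^n)=I(K_n;S_n)+I(K_n;Z^n\mid S_n).
\end{equation*}
The first term vanishes by the same counting argument as in Theorem~\ref{thm:univ}. That argument used only that $W^n$ is uniform on $\GF(q)^n$ and independent of $K_n$, and both facts still hold. For the second term, note that conditioned on $S_n=s$ we have $K_n=s+W^n$, a bijection of $W^n$. Therefore $I(K_n;Z^n\mid S_n)=I(W^n;Z^n\mid S_n)$.

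The main obstacle is to bound $I(W^n;Z^n\mid S_n)$ by $I(W^n;Z^n)$, because conditioning on $S_n$ could in principle increase mutual information. The plan is to exploit that $K_n$ is drawn independently of $(W^n,V^n,Z^n)$. This gives the Markov chain $S_n - W^n - Z^n$ (indeed $S_n=K_n-W^n$ with $K_n$ independent of $(W^n,Z^n)$). Expanding $I(W^n,S_n;Z^n)$ by the chain rule in both orders yields
\begin{equation*}
I(W^n;Z^n\mid S_n)=I(W^n;Z^n)-I(S_n;Z^n)+I(S_n;Z^n\mid W^n).
\end{equation*}
The last term is zero by the Markov chain and $I(S_n;Z^n)\ge0$, so $I(W^n;Z^n\mid S_n)\le I(W^n;Z^n)$. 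Finally, the triples are i.i.d., so $I(W^n;Z^n)=nI(W;Z)\le n\varepsilon$. Combining the pieces gives $\tfrac1n I(K_n;S_n,Z^n)\le\varepsilon$ for every $n$. As $\varepsilon\to0$ this becomes the weak-secrecy condition~\eqref{eq:leak} with $M=S_n$.
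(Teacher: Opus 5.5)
Your proposal is correct and follows the paper's proof essentially step for step. Like the paper, you inherit (i) and (iii) from Theorem~\ref{thm:univ} and use $K_n\perp S_n$ to eliminate $I(K_n;S_n)$. You then pass to $I(W^n;Z^n\mid S_n)$ through the bijection $K_n\leftrightarrow W^n$ given $S_n$, and bound it by $I(W^n;Z^n)=nI(W;Z)\le n\varepsilon$ using the Markov chain $S_n{-}W^n{-}Z^n$ that follows from $K_n\perp(W^n,Z^n)$. Your explicit choice $h_q(p_e)<\rho\le h_q(p_e)+\delta/\log q$ in (iii) simply spells out what the paper summarizes as holding ``verbatim.''
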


\begin{IEEEproof}
The reliability and rate analyses of Theorem~\ref{thm:univ} use only $p_e$ and the uniformity of $W$, so (i) and (iii) hold verbatim. The same holds for $K_n\perp S_n$, hence $I(K_n;S_n)=0$. Then
\begin{align}
I(K_n;S_n,Z^n)
&= I(K_n;Z^n\mid S_n) = I(W^n;Z^n\mid S_n) \notag\\
&\le I(W^n;Z^n) = nI(W;Z) \le n\varepsilon,
\end{align}
using the bijection $K_n\leftrightarrow W^n=K_n-S_n$ given $S_n$ for the second equality; the bound $I(W^n;Z^n|S_n)\le I(W^n;Z^n)$ holds because $K_n\perp(W^n,Z^n)$ implies $S_n\perp Z^n\mid W^n$, hence the Markov chain $S_n{-}W^n{-}Z^n$ and $I(W^n;Z^n|S_n)=I(W^n;Z^n)-I(S_n;Z^n)\le I(W^n;Z^n)$; the last equality uses the i.i.d.\ assumption. Dividing by $n$ proves (ii).
\end{IEEEproof}

Unlike the conventional pipeline~\cite{wang2011fast,zhang2016key}, our construction delivers the key directly without privacy amplification.

\begin{figure*}[!t]
\centering
\includegraphics[width=0.82\textwidth]{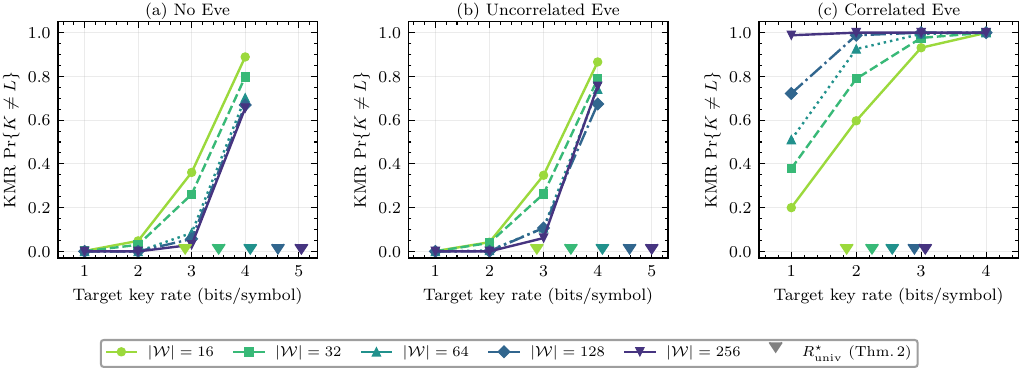}
\caption{Key mismatch rate (KMR) versus target key rate at $\mathrm{SNR}_{\mathrm{ab}}=20$~\si{dB} for VPQ at $q\in\{16,\dots,256\}$ across the three Eve scenarios. Triangles on the $x$-axis mark the per-alphabet universal limit $R^{\star}_{\mathrm{univ}}=(1-h_q(p_e))\log q$ from Theorem~\ref{thm:univ}, evaluated at the measured $p_e$.}
\label{fig:kmr}
\end{figure*}

\begin{figure}[!t]
\centering
\includegraphics[width=0.66\columnwidth]{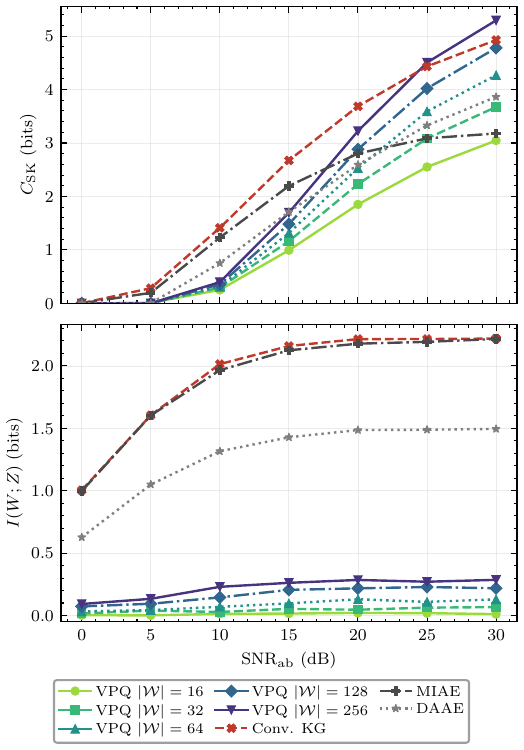}
\caption{Correlated-Eve scenario: achievable secret key rate $I(W;V)-I(W;Z)$ (top) and Eve leakage $I(W;Z)$ (bottom) versus $\mathrm{SNR}_{\mathrm{ab}}$. VPQ keeps $I(W;Z)$ below 0.3 bits for every $q$; the baselines leak up to 2.2 bits per symbol.}
\label{fig:corr}
\end{figure}

%% file: sections/experiments.tex
\section{Experiments on Fading Channels}
\label{sec:exp}

We evaluate VPQ on a canonical PLK setting where Alice and Bob observe correlated Gaussian samples
\begin{equation}
X = H + W_1,\qquad Y = H + W_2,
\end{equation}
with $H\sim\calN(0,P)$, $W_1\sim\calN(0,N_1)$, $W_2\sim\calN(0,N_2)$ independent, and each vector $X,Y\in\RR^{8}$ consists of $d=8$ i.i.d.\ such scalar samples. We consider three Eve scenarios: absent ($Z=\varnothing$); uncorrelated ($Z$ independent Gaussian); and correlated ($Z=H+W_3$, $W_3\sim\calN(0,N_3)$). We set $P=N_3=0$~\si{dBm}.

\emph{Mixed-SNR training.} We adopt the mixed-SNR training of~\cite{guo2025physical}: the per-sample legitimate noise variance is drawn uniformly, $N_1=N_2\sim\mathrm{Uniform}[-30,0]$~\si{dBm}, so that the legitimate signal-to-noise ratio (SNR) $\mathrm{SNR}_{\mathrm{ab}}:=P/N_1$ covers $[0,30]$~\si{dB} at every training step; the trained encoder is tested at specific SNRs. One encoder therefore serves the whole range without retraining or SNR feedback. The noise terms $W_1,W_2$ model the estimation error of the legitimate link.

\emph{Implementation.} Encoder $p_\theta$ is a fully connected network (FCN) with $8$ hidden layers of width $d_{\mathrm{FCN}}=1024$, batch normalization, and rectified-linear-unit (ReLU) activations; predictor $p_\psi$ is an $8$-layer FCN of width $2048$. We set $B=2048$; $\alpha=0.6$ without Eve and $0.9$ with Eve; $\lambda_1=1$ without or with an uncorrelated Eve, and $\lambda_1=2$ for $q\le 64$ and $4$ for $q\ge 128$ under the correlated Eve; $\lambda_2$ uses the adaptive rule of Section~\ref{sec:vpq}. Training uses Adam with learning rate $3{\times}10^{-5}$: $300$ epochs without or with an uncorrelated Eve, or $50$ pre-training epochs followed by adversarial fine-tuning up to epoch $1000$ under the correlated Eve. Reconciliation uses Reed--Solomon (RS) codes over $\GF(q)$ with blocklength $n=q-1$ and message length $m$ swept. Each test point uses $10^{6}$ samples. We compare VPQ at $q\in\{16,32,64,128,256\}$ against the three baselines of Section~\ref{sec:intro}: Conv.\ KG~\cite{zhang2016key} (CDF quantization and Gray coding); MIAE~\cite{guo2025physical}; and DAAE~\cite{zhou2023daae}.

\emph{Convergence} (Fig.~\ref{fig:conv}). $H(W)$ approaches $\log q$ within tens of epochs in every scenario, and $\pr{W=V}$ stabilizes after the transition from pre-training to adversarial fine-tuning. The empirical gap $\calI_{\mathrm{VUB}}-\calI_{\mathrm{VLB}}$ settles near $\log q$, which by Remark~\ref{rem:lautum} is the signature of the zero-leakage regime.

\emph{Reed--Solomon reconciliation} (Fig.~\ref{fig:kmr}). At $\mathrm{SNR}_{\mathrm{ab}}=20$~\si{dB} we sweep the RS message length $m$ to trace the key mismatch rate (KMR) against the target key rate $R=(m/n)\log q$. The triangles mark $R^{\star}_{\mathrm{univ}}$ at the measured $p_e$; by Theorem~\ref{thm:univ}, no scheme that is reliable and secure uniformly over $\calD_{p_e}$ can operate to the right of the corresponding triangle. The RS cliffs sit strictly to the left of the triangles, which is the expected finite-blocklength gap of a length-$(q-1)$ code. At moderate $p_e$ (no-Eve and uncorrelated-Eve panels) a larger $q$ shifts both the triangle and the cliff to the right, so larger alphabets reach higher target rates. Under a correlated Eve the adversarial constraint inflates $p_e$ more for large $q$ and collapses $R^{\star}_{\mathrm{univ}}$, so $q=256$ already yields $\mathrm{KMR}\approx 1$ at $R=1$ and the ordering inverts: the smallest alphabets deliver the best low-rate operating point.

\emph{Secret key rate under a correlated Eve} (Fig.~\ref{fig:corr}). For each frozen encoder we retrain a fresh predictor $p_\psi$ and report $I_{\mathrm{VLB}}$ as a tight estimate of $I(W;Z)$. VPQ keeps $I(W;Z)\le 0.3$ bits per symbol across all $q$ and all tested SNRs, so by Proposition~\ref{prop:bound} the corresponding $C_{\mathrm{SK}}$ is directly realizable as a usable key without privacy amplification. The three baselines leak 1.5 to 2.2 bits per symbol at 20 dB and above.

\emph{Complexity.} The dominant per-step training cost is the cross-batch double sum in $\calI_{\mathrm{VUB}}$~\eqref{eq:vub}, $\mathcal{O}(B^2q)$ operations, or $1.1\times10^{9}$ at $B=2048$ and $q=256$; the sum is a single matrix product and occurs only during training. Encoder and predictor passes add $\mathcal{O}(B\,d_{\mathrm{FCN}}^2)$ per layer and the remaining losses $\mathcal{O}(Bq)$. At key generation the encoder runs one forward pass per observation, $7.6$~M multiply--accumulates at $q=256$, against a per-dimension CDF lookup for Conv.\ KG; its memory is the $7.6$~M weights, $31$~MB in single precision, independent of the key length. The autoencoder baselines likewise run one forward pass per observation, and the sketch stage is common to all four methods.

%% file: sections/conclusion.tex
\section{Conclusion}
\label{sec:conc}

We presented VPQ, a two-stage framework in which encoders trained by a variational adversarial objective map correlated observations into a discrete key alphabet and a linear code-offset secure sketch reconciles the outputs into an identical key. The VPQ losses correspond term by term to the lower bound on the one-way secret key capacity of the induced source (Theorem~\ref{thm:vpq}), and $q$-ary linear sketches attain the worst-case universal rate of Theorem~\ref{thm:univ} over the corresponding i.i.d.\ source class. On Gaussian fading channels under absent, uncorrelated, and correlated Eve, VPQ keeps the leakage to Eve at most $0.3$ bits per symbol against up to $2.2$ for the baselines, and Reed--Solomon reconciliation operates within the predicted finite-blocklength gap. The quantization stage is independent of the direction of the public discussion, so a two-way reconciliation protocol, in which both parties publish, replaces only the sketch.

%% file: refs.bib
@article{ahlswede1993common,
  title={Common randomness in information theory and cryptography. {I. Secret} sharing},
  author={Ahlswede, Rudolf and Csisz{\'a}r, Imre},
  journal={IEEE Trans. Inf. Theory},
  volume={39},
  number={4},
  pages={1121--1132},
  year={1993}
}

@article{maurer1993secret,
  title={Secret key agreement by public discussion from common information},
  author={Maurer, Ueli M.},
  journal={IEEE Trans. Inf. Theory},
  volume={39},
  number={3},
  pages={733--742},
  year={1993}
}

@book{csiszar2011information,
  title={Information Theory: {C}oding Theorems for Discrete Memoryless Systems},
  author={Csisz{\'a}r, Imre and K{\"o}rner, J{\'a}nos},
  edition={2},
  year={2011},
  publisher={Cambridge Univ. Press}
}

@inproceedings{dodis2004fuzzy,
  title={Fuzzy extractors: {H}ow to generate strong keys from biometrics and other noisy data},
  author={Dodis, Yevgeniy and Reyzin, Leonid and Smith, Adam},
  booktitle={Adv. Cryptol. -- {EUROCRYPT}},
  pages={523--540},
  year={2004}
}

@article{csiszar1982linear,
  title={Linear codes for sources and source networks: {E}rror exponents, universal coding},
  author={Csisz{\'a}r, Imre},
  journal={IEEE Trans. Inf. Theory},
  volume={28},
  number={4},
  pages={585--592},
  year={1982}
}

@book{roth2006introduction,
  title={Introduction to Coding Theory},
  author={Roth, Ron M.},
  year={2006},
  publisher={Cambridge Univ. Press}
}

@inproceedings{wang2011fast,
  title={Fast and scalable secret key generation exploiting channel phase randomness in wireless networks},
  author={Wang, Qian and Su, Hai and Ren, Kui and Kim, Kwangjo},
  booktitle={Proc. IEEE INFOCOM},
  pages={1422--1430},
  year={2011}
}

@article{zhang2016key,
  title={Key generation from wireless channels: {A} review},
  author={Zhang, Junqing and Duong, Trung Q. and Marshall, Alan and Woods, Roger},
  journal={IEEE Access},
  volume={4},
  pages={614--626},
  year={2016}
}

@article{nguyen2021security,
  title={Security and privacy for {6G}: {A} survey on prospective technologies and challenges},
  author={Nguyen, Van-Linh and Lin, Po-Ching and Cheng, Bo-Chao and Hwang, Ren-Hung and Lin, Ying-Dar},
  journal={IEEE Commun. Surv. Tut.},
  volume={23},
  number={4},
  pages={2384--2428},
  year={2021}
}

@inproceedings{han2020physical,
  title={Physical layer secret key generation based on autoencoder for weakly correlated channels},
  author={Han, Jingyuan and Zeng, Xin and Xue, Xiaoping and Ma, Jingxiao},
  booktitle={Proc. IEEE/CIC ICCC},
  pages={1220--1225},
  year={2020}
}

@article{guo2025physical,
  title={Physical layer secret key generation based on mutual information-driven autoencoder},
  author={Guo, Dengke and Xiong, Jun and Ma, Dongtang and Liu, Xiaoran and Wei, Jibo},
  journal={IEEE Trans. Wireless Commun.},
  volume={24},
  number={10},
  pages={8042--8056},
  year={2025}
}

@article{zhou2023daae,
  title={Physical-layer secret key generation based on domain-adversarial training of autoencoder for spatial correlated channels},
  author={Zhou, Qingjiang and Zeng, Kai},
  journal={Appl. Intell.},
  volume={53},
  pages={5304--5319},
  year={2023}
}

@inproceedings{poole2019variational,
  title={On variational bounds of mutual information},
  author={Poole, Ben and Ozair, Sherjil and {van den Oord}, A\"aron and Alemi, Alex and Tucker, George},
  booktitle={Proc. Int. Conf. Mach. Learn. ({ICML})},
  pages={5171--5180},
  year={2019}
}

@inproceedings{cheng2020club,
  title={{CLUB}: {A} contrastive log-ratio upper bound of mutual information},
  author={Cheng, Pengyu and Hao, Weituo and Dai, Shuyang and Liu, Jiachang and Gan, Zhe and Carin, Lawrence},
  booktitle={Proc. Int. Conf. Mach. Learn. ({ICML})},
  pages={1779--1788},
  year={2020}
}

@article{palomar2008lautum,
  title={Lautum information},
  author={Palomar, Daniel P. and Verd{\'u}, Sergio},
  journal={IEEE Trans. Inf. Theory},
  volume={54},
  number={3},
  pages={964--975},
  year={2008}
}

@article{wyner1975wire,
  title={The wire-tap channel},
  author={Wyner, Aaron D.},
  journal={Bell Syst. Tech. J.},
  volume={54},
  number={8},
  pages={1355--1387},
  year={1975}
}
